\tikzstyle{vertex}=[circle,fill=black!15,minimum size=20pt,inner sep=0pt,font=\footnotesize]
\tikzstyle{smallvertex}=[vertex,minimum size=10pt]
\tikzstyle{operator}=[vertex,fill=black!1]
\tikzstyle{source} = [vertex, fill=red!34]
\tikzstyle{supersource} = [vertex, fill=blue!34,minimum size=15pt]
\tikzstyle{hiddensource} = [vertex, fill=red!8,minimum size=10pt]
\tikzstyle{smallsource} = [vertex, fill=red!34,minimum size=10pt]
\tikzstyle{receiver} = [vertex, fill=green!34]
\tikzstyle{smallreceiver} = [vertex, fill=green!34,minimum size=10pt]
\tikzstyle{edge} = [draw,thick,->]
\tikzstyle{undirect_edge} = [draw, thick]
\tikzstyle{dedge} = [edge,dotted]
\tikzstyle{redge} = [edge,color=red]
\tikzstyle{bedge} = [edge,color=blue]
\tikzstyle{gedge} = [edge,color=green]
\tikzstyle{medge} = [edge,color=magenta]
\tikzstyle{oedge} = [edge,color=orange]
\tikzstyle{bredge} = [edge,color=brown,line width=2pt]
\tikzstyle{weight} = [font=\footnotesize]
\tikzstyle{selected edge} = [draw,line width=5pt,-,red!50]
\newcommand{\bi}{\begin{itemize}}
\newcommand{\ei}{\end{itemize}}
\newcommand{\bal}{\begin{align}}
\newcommand{\eal}{\end{align}}
\newcommand{\de}{\stackrel{.}{=}}
\newcommand{\mc}{\mathcal}
\newtheorem{theorem}{Theorem}
\newtheorem{lemma}{Lemma}
\newtheorem{corollary}{Corollary}
\newtheorem{remark}{Remark}
\begin{document}

\title{Centralized vs Decentralized Multi-Agent Guesswork\vspace{-.15in}}

\author{\IEEEauthorblockN{Salman Salamatian}
\IEEEauthorblockA{MIT, USA%,\\
%Cambridge, MA, USA
}
\and
\IEEEauthorblockN{Ahmad Beirami}
\IEEEauthorblockA{MIT, USA%,\\
%Cambridge, MA, USA
}
\and
\IEEEauthorblockN{Asaf Cohen}
\IEEEauthorblockA{Ben-Gurion University, Israel%\\
%Beer-Sheva, 84105, Israel
}
\and
\IEEEauthorblockN{Muriel M\'edard}
\IEEEauthorblockA{MIT, USA%,\\
%Cambdridge, MA, USA
}
\vspace{-.1in}
}

\maketitle

\begin{abstract}
We study a notion of guesswork, where multiple agents intend to launch a coordinated brute-force attack to find a single binary secret string, and each agent has access to side information generated through either a BEC or a BSC. The average number of trials required to find the secret string grows exponentially with the length of the string, and the rate of the growth is called the guesswork exponent. We compute the guesswork exponent for several multi-agent attacks. We show that a multi-agent attack reduces the guesswork exponent compared to a single agent, even when the agents do not exchange information to coordinate their attack, and try to individually guess the secret string using a predetermined scheme in a decentralized fashion. Further, we show that the guesswork exponent of two agents who do coordinate their attack is strictly smaller than that of any finite number of agents individually performing decentralized guesswork. 
\end{abstract}

\begin{IEEEkeywords}
Guesswork; brute-force attack; coordinated attack.
 \end{IEEEkeywords}

% !TEX root = main.tex

\vspace{-.2cm}
\section{Introduction}
\label{sec:intro}

We consider a setup where a system is protected using a password $X^n \in \mc{X}^n$, drawn i.i.d. at random from a distribution $p_X(\cdot)$ on the finite alphabet $\mc{X}$. An adversary wishes to breach the system by guessing the password. Assuming $n$ is known to the adversary, a brute-force attack on the system would consist of first producing a list of all of the $|\mc{X}|^n$ strings in $\mc{X}^n$ ordered from the most likely to the least likely with respect to $p_{X^n}(\cdot)$, and then exhausting the list one by one until successfully guessing the password. Let the guesswork, denoted by $G(X^n)$, be defined as the position at which the password string $X^n$ appears in the adversary's list of all strings. The guesswork $G(X^n)$ can  be thought of as the computational cost in terms of number of queries required of an adversary to breach the system. 
As shall be discussed, $G(X^n)$ grows exponentially with $n$ for the processes considered in this paper, and the rate of its growth is referred to as the guesswork exponent.

If $m$ adversarial agents coordinate their attack on the secret string, the system will be compromised as soon as either of them succeeds, and hence, the average guesswork is reduced. Indeed, an optimal strategy would consist here of having each agent query the most likely sequence that has not yet been queried by any of the other  agents.  As the length of the password $n$ grows, the impact of finitely many agents becomes more and more negligible, and since the size of the list grows exponentially in $n$, dividing the list by a constant does not change the guesswork exponent.

In this work, we further assume that the agents have access to a side information string $Y^n$, which they use to construct an updated list of strings, this time ordered with respect to $p_{X^n|Y^n}(\cdot|Y^n)$.  In its most general form, this side information can model complex additional information that the adversary may have acquired on the choice of the password, ranging from background search on the user who chose the password, to simply \emph{behind the back} attacks in which an illegitimate person observes parts of the password. For example, considering $Y^n$ to be the output of a binary erasure channel can model an agent who has acquired parts of the secret password in the clear. Consider now a case in which multiple adversaries try to guess the password, each having access to some side information $Y^n_{(i)}$, which is assumed to be generated independently given $X^n$ through some discrete memoryless channel. Contrary to the case where there is no side information, we demonstrate that having even a fixed number of agents can help in reducing the exponent of the guesswork --- whether they coordinate and use their side information in a centralized manner, or try independently in a decentralized way to guess the password (see Fig.~\ref{fig:1}). We illustrate the impact of multiple agents by studying both the centralized and the decentralized mechanisms for side information provided through the binary symmetric channel ($\texttt{BSC}$) and the binary erasure channel ($\texttt{BEC}$). 

This setting can also indirectly model adversaries and users over multiple accounts, some of which have been compromised. Suppose a user has several accounts, each requiring a password. The user may decide to use one identical password for all of the accounts, where the compromise of one of the accounts puts in peril all of his accounts. On the other extreme, he may decide to use completely independent passwords for each of the accounts, in which case one password being compromised does not give away any information on any of the other passwords. In practice, most users   settle for a solution in between these two extremes. For example, the user may choose to slightly tweak their passwords from one account to another as to avoid the disastrous consequences of one account being compromised providing access to the rest of the accounts, while still maintaining some convenience. In this case, if one password is compromised, an adversary gains some side-information about the rest of the passwords.

The normalized moments of guesswork are of great interest as they provide operational meanings in several information theoretic problems.
For any $\alpha >0$, let $E_\alpha(p_X)$ denote the guesswork exponent and be defined as 
$$
E_\alpha(p_X) : = \frac{1}{\alpha}\lim_{n \to \infty} \frac{1}{n} \log E\{ [G(X^n)] ^ \alpha\},
$$
where the expectation is with respect to the measure $p_X$.
Further, let $E_0(p_X): = \lim_{\alpha \to 0} E_\alpha(p_X).$
For example, $E_1(p_X)$ is the exponential  growth rate of the expected number of queries required of the adversary to breach the secret string, and $E_0(p_X)$ is the average codeword rate in optimal one shot source coding \cite{Kosut, ITW14}.
Similarly, one can extend these notions to guesswork with side-information.
The conditional guesswork, denoted $G(X^n|Y^n)$, can be thought of as the computational cost of an agent who has acquired side information $Y^n$.  The conditional guesswork exponent $E_\alpha(p_{X,Y})$ then describes the exponential rate of conditional guesswork.

\noindent \textbf{Related Work:} We briefly mention some related work. Guesswork was first considered in \cite{Messay}, where it was shown that guesswork is not necessarily related to the Shannon entropy. In \cite{Arikan}, it is shown that the moments of guesswork for i.i.d. sequences are related to the R\'enyi-entropy of the source. Since then, this was generalized to various source processes (see~\cite{malone2004guesswork,pfister2004renyi}), and under source uncertainty in \cite{Sundaresan}. In \cite{ChristiansenDuffy}, guesswork is shown to satisfy a large deviation principle. \cite{ArikanMerhav} studies guesswork subject to distortion. A geometric perspective on guesswork is introduced in \cite{Beirami2}. Guesswork, as a metric for quantifying the computational effort of brute-force attacks has been studied under various settings: under an entropy constraint in \cite{Beirami}, over the typical set in \cite{christiansen2013brute}, multiple users in \cite{christiansen2015multi}, with erasures in \cite{christiansen2013guessing}.
Guesswork is central to several other problems in information theory, ranging from the computational cost of sequential decoding \cite{Arikan}, to the error exponent in list decoding~\cite{Merhav-list}.

\noindent \textbf{Main Contribution:} In this paper, we consider the guesswork exponent under two types of side information, namely $\texttt{BEC}_{\epsilon}$ and $\texttt{BSC}_\delta$, where $\epsilon$ and $\delta$ are the respective channel parameters. We characterize the impact of multiple agents in this setting, and show that even a finite number of agents reduces the conditional guesswork exponent. We carry this out by considering two extreme settings, one in which the agents are guessing the password, individually and independently (decentralized mechanism), and one in which all the side information is collected and used collectively (centralized mechanism). Section~\ref{sec:prelim} introduces the setting along with some notations and background on guesswork with side information. Results for the $\texttt{BEC}_\epsilon$ and $\texttt{BSC}_\delta$ are presented in Section~\ref{sec:BEC} and Section~\ref{sec:bsc}, respectively.

\begin{figure}
	\centering
	\includegraphics[width = .48\textwidth]{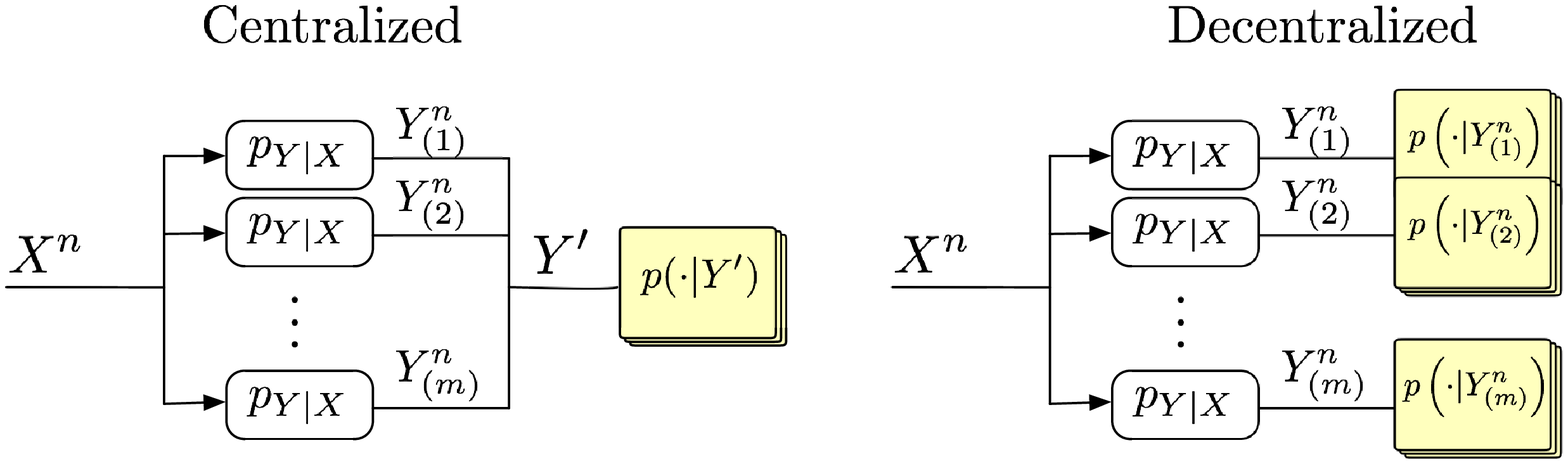}
	\caption{In the centralized mechanism, a single list is constructed by collecting all the side-informations. In the decentralized setting, each agent constructs a separate list.}\label{fig:0}
	\vspace{-.2in}
\end{figure}

\section{Preliminaries}\label{sec:prelim}

\subsection{Notations}
Let $(X^n,Y^n) := (X_1,Y_1),\ldots,(X_n,Y_n)$, where $(X_i,Y_i) \in \mathcal{X}\times \mathcal{Y}$, denote a random string of length $n$ drawn i.i.d. from a distribution $p_{X,Y}$ over some finite alphabet $\mathcal{X} \times \mathcal{Y}$. 
The sequence $X^n$ can be thought of as the password to guess, while the sequence $Y^n$ can be thought of as side information. The conditional guesswork $\mathbb{E}\left[ G(X^n|Y^n) \right]$ is then the computational cost of the adversary with side information $Y^n$.
For $\beta>0, \beta \neq 1$, we denote by $H_\beta(X)$ and $H_\beta(X|Y)$, respectively, the R\'enyi-entropy and conditional R\'enyi-entropy of order $\beta$, defined in the usual way:
\begin{eqnarray*}
	H_{\beta}(X) &=& \frac{\beta}{1-\beta} \log \left( \sum_{x} p_X(x)^\beta \right)^{1/\beta}, \nonumber \\
	H_{\beta}(X|Y) &=& \frac{\beta}{1 - \beta} \log \left( \sum_{y} \left( \sum_{x} p_{X,Y}(x,y)^\beta \right)^{1/\beta}\right).
\end{eqnarray*} 
We will focus on the case of binary input alphabets, \emph{i.e.}, $\mathcal{X} = \{ 0,1\}$.
For $0 \leq p \leq 1/2$, we denote by $H_{\beta}(p)$ the binary R\'enyi entropy of order $\beta$, and by $H(p)$ the binary Shannon entropy. Furthermore, we let $D(p||q)$ be defined as the KL-divergence between two binary distributions parameterized by $p$ and $q$, respectively, that is:
\begin{align}
D(p||q) = p\log \frac{p}{q} + (1-p)\log \frac{1-p}{1-q} .
\end{align}
Given an observation $Y^n = y^n$, we denote by $G(X^n|Y^n = y^n)$ the position of $X^n$ in the list of ordered sequences $x^n$ from most likely to least likely according to $p_{X^n|Y^n}(\cdot | y^n)$. The conditional Guesswork $\mathbb{E} \left[ G(X^n | Y^n)^\alpha \right]$ is then the average $\sum_{y^n}  p_{Y^n}(y^n)\mathbb{E} \left[ G(X^n | Y^n = y^n )^\alpha \right]$. We are interested in the conditional guesswork exponent defined as 
\begin{align}
E_\alpha(p_{X,Y}) :=  \lim_{n \to \infty} \frac{1}{n} \log \mathbb{E}[G(X^n|Y^n)^\alpha],
\end{align}
for $\alpha > 0$. An application of L'Hopital's rule yields the following useful equality:
\begin{align}
\lim_{\alpha \to 0} \frac{1}{\alpha} E_\alpha(p_{X,Y}) = \lim_{n \to \infty} \frac{1}{n} \mathbb{E} \left[ \log(G(X^n|Y^n))\right].
\end{align}
In a seminal result, Ar{\i}kan \cite{Arikan} showed that the moments $\alpha$ of guesswork are related to the Renyi entropies of order $\frac{1}{1+ \alpha}$ of the source, that is:
\begin{align}
E_\alpha(p_{X,Y}) = \alpha H_{\frac{1}{1+\alpha}}(X|Y).
\end{align}
When the input distribution $p_X$ is clear from context, we may write $E_\alpha(p_{Y|X})$. We use $f(n) \de g(n)$, if $\lim_{n \to \infty} \frac{\log f(n)}{\log g(n)} = 1$. Logarithms and exponents are in base 2.

\subsection{Background on Noise and Erasures}

For the remainder of the paper, we will suppose that $X^n$ is a uniform Bernoulli sequence, and we will be interested in two families of side information. Namely, we will let $Y^n$ be the output of $X^n$ through a binary symmetric channel ($\texttt{BSC}_\delta$), or through a binary erasure channel ($\texttt{BEC}_\epsilon$). We will use the notation $E_\alpha(\texttt{BSC}_\delta)$ and $E_\alpha( \texttt{BEC}_\epsilon)$, to denote each corresponding exponent, where it is implicit that the input distribution $p_{X^n}$ is chosen to be uniform over binary sequences of length $n$.

\noindent \textbf{BSC:} Let $Y^n$ be the output of $X^n$ through a \texttt{BSC} with flip-over probability $\delta \leq 1/2$. Noting that $X^n = Y^n + Z^n$, where the addition operation is over $\mathbb{Z}_2$, it is easy to see that $G(X^n|Y^n) = G(Z^n)$, and the average guesswork is given by:
\begin{align}
E_\alpha(\texttt{BSC}_\delta) =  \lim_{n \to \infty} \frac{1}{n} \log \mathbb{E}[G(Z^n)^\alpha] = \alpha H_{1/(1+\alpha)}(\delta).
\end{align}

\noindent \textbf{BEC:} Let $Y^n$ be the output of $X^n$ through a \texttt{BEC} channel with erasure probability $0 \leq \epsilon \leq 1$. Denote by $\mathcal{E}_n$ the number of erasures. Then, we have that $G(X^n|Y^n) = G(X'^{\mathcal{E}_n})$, where $X'^{\mathcal{E}_n}$ is the erased sequence. It has been shown in \cite{christiansen2013guessing}, using results from large deviation theory, that the $\alpha$-th moment of guesswork in this setting (referred to as \emph{subordinated Guesswork} in \cite{christiansen2013guessing}) is: 
\begin{align}
E_\alpha(\texttt{BEC}_\epsilon) = \sup_{\lambda \in [0,1]} \left( \alpha \lambda -  D(\lambda||\epsilon) \right) .  
\end{align}
Specifically, for $\alpha = 1$, the exponent of the average guesswork is given by:
\begin{align}
\lim_{n \to \infty} \frac{1}{n} \log \mathbb{E}[G(X^{\mathcal{E}_n})] = \log \left( 1 + \epsilon \right).
\end{align}

Finally, the following lemma which we will use in the proofs, characterizes the guesswork exponent of a sequence generated by the concatenation of a uniform binary sequence, and an arbitrary \emph{i.i.d.} sequence.

\begin{lemma}\label{lem:concatenation_uniform}
	Let $U \sim Ber(1/2)$ and $V \sim Ber(p)$, with $p \leq 1/2$, and denote by $U^{m_n}$ and $V^{n-m_n}$ their \emph{i.i.d.} sequences, for some sequence $m_n$ such that $\lim_{n \to \infty} \frac{m_n}{n} = \lambda$. Then, the guesswork exponent for the sequence $X^n = (U^{m_n}, V^{n - m_n})$ obtained by the concatenation of $U^{m_n}$ and $V^{n-m_n}$ is:
	\begin{align}
	\lim_{n \to \infty}\frac{1}{n} \log \mathbb{E} \left[ G(X^n)^\alpha \right] = \lambda \alpha + (1 - \lambda) \alpha H_{1/1+ \alpha} (p).
	\end{align}
\end{lemma}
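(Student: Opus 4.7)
The plan is to apply Ar{\i}kan's original finite-blocklength upper and lower bounds on the moments of guesswork directly to $X^n$, rather than the asymptotic formula $E_\alpha=\alpha H_{1/(1+\alpha)}$ stated above, which was derived for i.i.d.\ sources. The crucial point is that these finite-blocklength bounds, as proved in \cite{Arikan}, hold for \emph{any} discrete random variable $W$ on a finite alphabet $\mathcal{W}$, namely
\begin{align*}
(1+\ln|\mathcal{W}|)^{-\alpha}\,2^{\alpha H_{1/(1+\alpha)}(W)} \;\leq\; \mathbb{E}[G(W)^\alpha] \;\leq\; 2^{\alpha H_{1/(1+\alpha)}(W)}.
\end{align*}
Taking $W=X^n$, with $|\mathcal{W}|=2^n$, reduces the problem to computing the R\'enyi entropy of $X^n$ and tracking a polynomial correction.

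Next, I would use that R\'enyi entropy is additive over independent random variables: since $U^{m_n}$ and $V^{n-m_n}$ are independent and each is i.i.d., a direct expansion of the defining sum yields
\begin{align*}
H_{1/(1+\alpha)}(X^n) = m_n\,H_{1/(1+\alpha)}(1/2) + (n-m_n)\,H_{1/(1+\alpha)}(p) = m_n + (n-m_n)\,H_{1/(1+\alpha)}(p),
\end{align*}
where I used the elementary fact that a uniform binary source has R\'enyi entropy equal to $1$ at every order.

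Finally, I would divide Ar{\i}kan's two-sided bound by $n$, take logarithms, and let $n\to\infty$. The prefactor $(1+n\log 2)^{-\alpha}$ contributes only an $O(\log n / n)$ term that vanishes in the exponent, while $m_n/n\to\lambda$ delivers
\begin{align*}
\lim_{n\to\infty}\frac{1}{n}\log \mathbb{E}[G(X^n)^\alpha] = \alpha\lambda + \alpha(1-\lambda)\,H_{1/(1+\alpha)}(p),
\end{align*}
which is the claimed identity. There is no substantive obstacle: the only subtlety worth flagging is the need to invoke Ar{\i}kan's bound in its fully general form, valid for arbitrary discrete random variables rather than solely i.i.d.\ sequences, since $X^n$ has heterogeneous marginals across its two blocks. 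Once this is observed, additivity of R\'enyi entropy under independence closes the argument immediately.
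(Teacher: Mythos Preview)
Your proposal is correct. Ar{\i}kan's two-sided bound indeed holds for any finite discrete random variable, R\'enyi entropy is additive over independent components, and the polynomial prefactor vanishes in the exponent; together these give exactly the stated limit.

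Your route, however, differs from the paper's. The paper argues combinatorially: in the optimal list for $X^n=(U^{m_n},V^{n-m_n})$, every candidate value of $V^{n-m_n}$ is accompanied by $2^{m_n}$ equally likely completions in the uniform block, so effectively $G(X^n)$ behaves like $2^{m_n}$ times the guesswork of $V^{n-m_n}$, and the exponent splits as $\alpha\,m_n/n + \alpha\,(1-m_n/n)\,H_{1/(1+\alpha)}(p)$ by applying the single-source result to the $V$-block. Your argument instead black-boxes the guesswork entirely via the general Ar{\i}kan sandwich and pushes all the structure into additivity of R\'enyi entropy under independence. What the paper's approach buys is operational intuition about the list itself (and it would extend, with more work, to settings where one wants finer control over $G(X^n)$ as a random variable). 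What your approach buys is brevity and robustness: it requires no reasoning about the ordering of the list and would go through verbatim for any finite product of independent heterogeneous blocks, not just a uniform block concatenated with a Bernoulli block.
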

\begin{proof}[Proof Sketch]
	The result follows from the fact that we need to guess the subsequence $V^{n-m_n}$, but each such subsequence has $2^{m_n}$ uniform possibilities for $U^{m_n}$.
\end{proof}

\subsection{Setting}

As shown above, the problem of Guesswork under side information is well understood. A more complicated problem is one in which multiple agents receive side information, and not a single source of side information. Precisely, let there be $m$ agents, each observing an independent realization of a side information $Y_{(i)}^n, i = 1,\ldots,m$, where $Y^n_{(i)}$ is the output of the password sequence $X^n$ through a discrete memory-less channel.
Clearly, if all the agents cooperate and share their side information, they can construct an optimal list based on the aggregate collection of side information $Y' = (Y_{(1)}^n, \ldots, Y_{(m)}^n)$. This strategy clearly out performs the strategy in which each agent tries to guess the sequence on its own. However, it is not clear to which extent this sharing of side information improves the exponent with respect to a decentralized approach. To answer this question, we consider the two families of side information we already introduced, namely $\texttt{BEC}_\epsilon$ and $\texttt{BSC}_\delta$, and characterize the conditional guesswork exponent under the two following strategies, illustrated in Fig.~\ref{fig:0}:

\noindent \textbf{Decentralized Mechanism:} Each of the $m$ agents tries to guess $X^n$ based on its own observation $Y_{(i)}^n$. The process ends when at least one of the agents correctly guesses $X^n$. The conditional guesswork exponent for this strategy, denoted $E^{(\text{d})}_\alpha(p_{Y|X}^m)$, is therefore:
\begin{equation}
\frac{1}{\alpha} E^{(\text{d})}_\alpha(p_{Y|X}^m) = \frac{1}{\alpha} \lim_{n \to \infty} \frac{1}{n}\log \mathbb{E}\left[ \min_{i = 1, \ldots,m} \left\{ G(X^n | Y^n_{(i)})^\alpha\right\} \right].
\end{equation}

\noindent \textbf{Centralized:} The agents share their observations $Y_{(i)}^n$, $i = 1,\ldots,m$ with a central authority who collapses the side information and constructs an optimal list based on $p_{X|Y_{(1)},\ldots,Y_{(m)}}$. The conditional guesswork exponent for this strategy is denoted by $E^{\text{(c)}}_\alpha(p_{Y|X}^m)$, and:
\begin{equation}
E^{(\text{c})}_\alpha(p_{Y|X}^m) = E_\alpha(p_{Y'|X}),
\end{equation}
where $Y' = (Y_{1}, \ldots, Y_{(m)})$ and $p_{Y'|X}(y_1,\ldots,y_m | x) = \prod_{i = 1}^m p_{Y|X}(y_i|x)$.

Note that it follows directly that $E^{\text{(d)}}_\alpha(p_{Y|X}^1) = E^{(\text{c})}_\alpha(p_{Y|X}^1) = E_\alpha(p_{Y|X})$.

In the rest of the paper, we will characterize the conditional guesswork exponents under $\texttt{BEC}_\epsilon$ and $\texttt{BSC}_\delta$ side information. Precisely, we let $Y_{(1)}^n, \ldots, Y_{(m)}^n$ be the output of $X^n$ through $m$ independent $\texttt{BSC}_\delta$ or $\texttt{BEC}_\epsilon$ channels. Note that, even though the initial channel is a simple binary channel, the resulting channel from the collapsing of the side information may be more complex. This will be the case for \texttt{BSC}. In the next section, we analyze the guesswork exponent for the \texttt{BEC} side information. The analysis for the \texttt{BSC} is in Section~\ref{sec:bsc}.

It has to be noted that we are studying asymptotic behaviors for fixed $m$, that is $m$ does not grow with $n$. In the sequel, we may take the limit when $m \to \infty$ and determine say $\lim_{m \to \infty} E_\alpha^{\text{(c)}}(p_{Y|X}^m)$, where $E_\alpha^{\text{(c)}}(p_{Y|X}^m)$ is itself the result of a limit when $n \to \infty$. It is understood here that the order of the limits is crucial and an interchange of limit is not possible.

\section{BEC}\label{sec:BEC}

\subsection{Centralized Mechanism}

The $\texttt{BEC}_\epsilon$ is simple to analyze because collapsing information is tractable. In particular, the symbol in position $i$ in the sequence $X^n$ is erased in all received signals $Y_i^n$ with probability $\epsilon^m$. Therefore, the resulting collapsed random variable $\tilde{Y}^n$ is the output of $X^n$ through a $\texttt{BEC}$ with erasure probability $\epsilon^m$, and we have the following.

\begin{theorem}
	The guesswork exponent for the centralized Mechanism with $m$ agents under \texttt{BEC} is:
	\begin{align}
	E^{(c)}_\alpha(\texttt{BEC}_\epsilon^m) = \max_{\lambda \in [0,1]} \left( \alpha \lambda -  D(\lambda \| \epsilon^m) \right).
	\end{align}
\end{theorem}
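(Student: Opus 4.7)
The plan is to reduce the centralized $m$-agent $\texttt{BEC}$ problem to a single-agent $\texttt{BEC}$ problem with a modified erasure parameter, and then invoke the formula $E_\alpha(\texttt{BEC}_\epsilon) = \sup_{\lambda \in [0,1]} (\alpha\lambda - D(\lambda\|\epsilon))$ already stated in Section~\ref{sec:prelim}.

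First, I would argue that the vector $(Y_{(1)}^n,\ldots,Y_{(m)}^n)$ carries exactly the same posterior information about $X^n$ as the output of $X^n$ through a single $\texttt{BEC}$ with erasure probability $\epsilon^m$. The argument is coordinate-wise: each channel is memoryless and each $Y_{(j),i}$ is either an exact copy of $X_i$ or an erasure. Define $\tilde Y_i = X_i$ if at least one of $Y_{(1),i},\ldots,Y_{(m),i}$ is unerased, and $\tilde Y_i = \mathrm{e}$ otherwise. Since the $m$ erasure events at coordinate $i$ are independent Bernoulli$(\epsilon)$, we have $\Pr(\tilde Y_i = \mathrm{e}) = \epsilon^m$, so $\tilde Y^n$ is the output of $X^n$ through a $\texttt{BEC}_{\epsilon^m}$. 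Moreover $\tilde Y^n$ is a sufficient statistic: on the unerased coordinates of $\tilde Y^n$ we know $X_i$ exactly, while on the erased coordinates the posterior is uniform $\mathrm{Ber}(1/2)$ both for $\tilde Y^n$ and for $(Y_{(1)}^n,\ldots,Y_{(m)}^n)$. Hence the optimal ordering over $\mathcal{X}^n$ coincides in the two models and the conditional guesswork agrees realization-by-realization.

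This reduction gives $E^{(c)}_\alpha(\texttt{BEC}_\epsilon^m) = E_\alpha(\texttt{BEC}_{\epsilon^m})$, and substituting $\epsilon^m$ for $\epsilon$ in the one-agent formula yields
\begin{equation*}
E^{(c)}_\alpha(\texttt{BEC}_\epsilon^m) \;=\; \sup_{\lambda \in [0,1]}\bigl(\alpha\lambda - D(\lambda\|\epsilon^m)\bigr).
\end{equation*}
The supremum is attained because $[0,1]$ is compact and the objective is continuous in $\lambda$ (with $D(0\|\epsilon^m)$ and $D(1\|\epsilon^m)$ interpreted as their limits), which justifies writing $\max$ in the statement.

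The only delicate step is the sufficient-statistic reduction; everything else is a direct appeal to an earlier formula. I would therefore spend most of the write-up on that step, emphasizing that because each individual $\texttt{BEC}$ only ever returns either the true bit or an erasure flag, the \emph{location pattern} of fully-erased coordinates is the unique piece of information that matters, and this pattern is exactly Bernoulli$(\epsilon^m)$ i.i.d. No additional large-deviations work is needed beyond what already justifies the single-agent $\texttt{BEC}$ expression.
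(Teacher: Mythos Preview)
Your proposal is correct and follows exactly the paper's approach: the paper observes in the paragraph preceding the theorem that collapsing the $m$ independent $\texttt{BEC}_\epsilon$ observations yields a single $\texttt{BEC}_{\epsilon^m}$ (since a coordinate is erased in all $m$ outputs with probability $\epsilon^m$), and then the result follows immediately from the single-agent formula $E_\alpha(\texttt{BEC}_\epsilon) = \sup_{\lambda \in [0,1]}(\alpha\lambda - D(\lambda\|\epsilon))$ recalled in Section~\ref{sec:prelim}. Your write-up is more explicit about the sufficient-statistic reduction and the attainment of the supremum, but the argument is the same.
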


Carrying out the maximization for $\alpha = 1$, we have the following.
\begin{corollary}
	The centralized Mechanism with $m$ agents under \texttt{BEC} side information has expected Guesswork exponent (see Fig.~\ref{fig:1}):
	\begin{align}
	E^{(c)}_1(\texttt{BEC}_\epsilon^m) = \log \left( 1 + \epsilon^m \right).
	\end{align}
\end{corollary}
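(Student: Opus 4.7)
The plan is to reduce the corollary directly to the single-agent BEC formula and then specialize $\alpha=1$. The paragraph preceding the theorem has already observed that $m$ independent $\texttt{BEC}_\epsilon$ side informations collapse into a single $\texttt{BEC}_{\epsilon^m}$, since a coordinate of $X^n$ is invisible to the central authority iff all $m$ agents erase it, which happens independently with probability $\epsilon^m$. Combined with the theorem, this gives $E^{(c)}_\alpha(\texttt{BEC}_\epsilon^m) = E_\alpha(\texttt{BEC}_{\epsilon^m})$, so the corollary is just the $\alpha=1$ evaluation of the single-BEC formula $\log(1+\cdot)$ stated in the preliminaries, with $\epsilon$ replaced by $\epsilon^m$.

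For a self-contained derivation, I would perform the one-variable optimization in $\lambda$. Differentiating $f(\lambda):=\lambda - D(\lambda\|\epsilon^m)$, using $\frac{d}{d\lambda}D(\lambda\|q) = \log(\lambda/q) - \log((1-\lambda)/(1-q))$, the first-order condition $f'(\lambda)=0$ becomes
$$
\log \frac{\lambda(1-\epsilon^m)}{(1-\lambda)\epsilon^m} = 1,
$$
which yields the interior maximizer $\lambda^* = \frac{2\epsilon^m}{1+\epsilon^m}\in[0,1]$. At this point $\lambda^*/\epsilon^m = 2/(1+\epsilon^m)$ and $(1-\lambda^*)/(1-\epsilon^m) = 1/(1+\epsilon^m)$, so
$$
D(\lambda^*\|\epsilon^m) = \lambda^*\bigl(1 - \log(1+\epsilon^m)\bigr) - (1-\lambda^*)\log(1+\epsilon^m) = \lambda^* - \log(1+\epsilon^m).
$$
Substituting back gives $f(\lambda^*) = \log(1+\epsilon^m)$, and concavity of $f$ (inherited from $-D(\cdot\|\epsilon^m)$) certifies that $\lambda^*$ is indeed the global maximum over $[0,1]$.

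There is no substantive obstacle here. The only items that deserve a sentence of justification are (i) that the collapsed channel is again a BEC with parameter $\epsilon^m$, which is immediate from the independence of the erasures, and (ii) that the critical point $\lambda^*$ lies in the admissible range, which is clear since $\epsilon^m\in[0,1]$. Everything else is a short convex optimization whose answer is forced by the shape of $D(\cdot\|\epsilon^m)$.
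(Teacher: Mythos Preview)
Your proposal is correct and matches the paper's approach: the paper simply states ``Carrying out the maximization for $\alpha = 1$'' without further detail, and you have supplied exactly that computation. Your identification of the maximizer $\lambda^* = 2\epsilon^m/(1+\epsilon^m)$ and the evaluation $f(\lambda^*)=\log(1+\epsilon^m)$ are both correct (recall logs are base~2 here), and the observation that the collapsed channel is a $\texttt{BEC}_{\epsilon^m}$ is precisely what the paper uses to obtain the theorem preceding the corollary.
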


\begin{remark}
	The function $f(x) = \log(1 + x^m)$ over $x \in [0,1]$, is convex for any $m \geq 2$. Moreover, as the number of agents increases, the exponents tends towards a flat function $E^{\text{(c)}}_\alpha = 0$, with a discontinuity at $\epsilon = 1$. Moreover, since the first derivative (when $\alpha = 1$) is $m\frac{\epsilon^{m-1}}{1+\epsilon^{m}}$ for any $m \geq 2$, the centralized curve starts flat with a negligible exponent for small $\epsilon$.
\end{remark}

\begin{figure}[t]
	\centering
	\includegraphics[scale=.620]{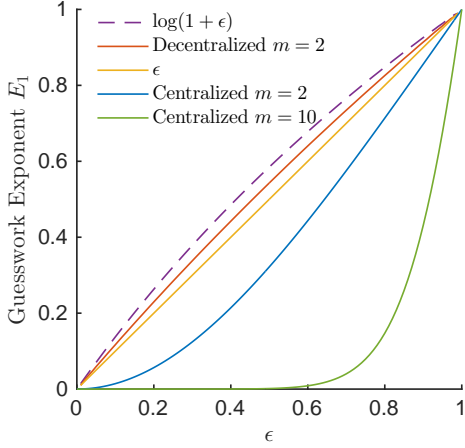}
	\caption{Comparison of centralized and decentralized settings for the \texttt{BEC}.}\label{fig:1}
	\vspace{-.15in}
\end{figure}

\subsection{Decentralized Mechanism}

The study of the decentralized case is more involved since, on the one hand, one cannot construct a unified list based on all $\left\{ Y^n_{(i)}\right\}_{i=1}^m$, yet, on the other hand, the guesswork random variables $G\left(X^n | Y^n_{(i)}\right)$  are not independent and one cannot easily combine $\left\{ G\left(X^n|Y^n_{(i)}\right)\right\}_{i = 1}^m$. First, we discuss the result:

\begin{theorem}\label{thm:distrib_BEC}
	The decentralized mechanism with \texttt{BEC} side-information has Guesswork Exponent:
	\begin{align}
	E_\alpha^{(d)}(\texttt{BEC}_\epsilon^m)= \sup_{\lambda \in [0,1]} \left( \alpha \lambda - m D(\lambda || \epsilon) \right).
	\end{align}
	\vspace{-.2in}
\end{theorem}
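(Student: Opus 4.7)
The plan is to prove matching asymptotic upper and lower bounds on $\mathbb{E}[\min_i G_i^\alpha]$, where $G_i := G(X^n|Y_{(i)}^n)$. The key enabling observation, analogous to the single-agent analysis of $\texttt{BEC}_\epsilon$, is that since $X^n$ is uniform and the erasures are input-independent, conditioning on the erasure count $|S_i|=k_i$ for agent $i$ makes $G_i$ marginally uniform on $\{1,\ldots,2^{k_i}\}$, while the $m$ counts $|S_i|$ are i.i.d.\ $\mathrm{Bin}(n,\epsilon)$. The $G_i$'s, however, are \emph{not} jointly independent because they are deterministic functions of the shared source $X^n$ (and the erasure supports may overlap), which is the only point requiring care.

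For the upper bound, I would use the elementary pointwise inequality $\min_i G_i^\alpha \leq G_j^\alpha$ valid for each $j$, yielding $\mathbb{E}[\min_i G_i^\alpha\,|\,\vec k] \leq \min_j \mathbb{E}[G_j^\alpha\,|\,k_j] \leq C_\alpha\, 2^{\alpha \min_j k_j}$. Averaging over $\vec k=(k_1,\ldots,k_m)$, using the Stirling estimate $\Pr[|S_i|=k_i]\leq 2^{-nD(k_i/n\|\epsilon)}$ and the fact that there are only $(n+1)^m$ lattice points, a Laplace-type argument yields
\[
\limsup_{n\to\infty}\tfrac{1}{n}\log \mathbb{E}[\min_i G_i^\alpha] \leq \max_{\boldsymbol\lambda\in[0,1]^m}\Bigl(\alpha \min_i \lambda_i - \sum_{i=1}^m D(\lambda_i\|\epsilon)\Bigr).
\]
A short optimization shows this maximum equals $\sup_\lambda(\alpha \lambda - m D(\lambda\|\epsilon))$: fixing $\min_i \lambda_i = \lambda$, the sum $\sum_i D(\lambda_i\|\epsilon)$ is minimized either by $\lambda_i=\lambda$ (if $\lambda\geq\epsilon$) or by $\lambda_i=\epsilon$ for $i\geq 2$ (if $\lambda<\epsilon$); since $\alpha>0$, the unique unconstrained maximizer of the one-variable problem lies strictly above $\epsilon$, so the former regime dominates and all $\lambda_i$ collapse to the common value $\lambda^*$.

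For the lower bound, let $\lambda^*$ denote the above maximizer and restrict to the event $A=\{|S_i|=\lfloor \lambda^* n\rfloor\ \forall i\}$, for which $\Pr[A]\geq 2^{-mnD(\lambda^*\|\epsilon)-o(n)}$ by independence and Stirling. On $A$, each $G_i$ is marginally uniform on $\{1,\ldots,2^{\lfloor\lambda^* n\rfloor}\}$, so a union bound gives
\[
\Pr\!\left[\min_i G_i \leq \tfrac{2^{\lambda^* n}}{2m}\,\Big|\,A\right]\ \leq\ m\cdot \tfrac{1}{2m}\ =\ \tfrac{1}{2},
\]
whence $\mathbb{E}[\min_i G_i^\alpha \mid A]\geq \tfrac{1}{2}\bigl(2^{\lambda^* n}/(2m)\bigr)^\alpha = c(m,\alpha)\cdot 2^{\alpha \lambda^* n}$. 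Combining, $\mathbb{E}[\min_i G_i^\alpha] \geq c(m,\alpha)\cdot 2^{n(\alpha \lambda^* - m D(\lambda^*\|\epsilon))-o(n)}$, matching the upper bound after dividing by $n$ and sending $n\to\infty$.

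The main obstacle is precisely the dependence between the $G_i$'s induced by the shared source $X^n$, which rules out a direct LDP for i.i.d.\ vector-valued sequences. My approach side-steps this: the upper bound uses only a deterministic pointwise inequality, and the lower bound uses only the marginal uniform law of each $G_i$ together with a union bound, because a constant-probability lower bound on the minimum already suffices for the right exponential rate. A tighter joint analysis would refine polynomial prefactors but would not affect the exponent.
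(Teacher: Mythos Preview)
Your argument is correct, and the upper bound is essentially the paper's: bounding $\min_i G_i$ by the guesswork of the agent with the fewest erasures and then optimizing over the erasure-fraction profile is exactly the ``shortest sequence'' step in the paper's sketch, just written out in coordinates. The genuine difference is in the lower bound. The paper argues via a genie/oracle: it hands the agents the locations of their common erasures, lets them build jointly optimal lists that avoid wasting queries on overlapping substrings, and then shows that even this strictly improved (centralized-like) scheme cannot beat the claimed exponent, essentially because the gain from coordination is at most a constant factor per step. Your route is more elementary and sidesteps the coordination issue entirely: you restrict to the event that all erasure counts equal $\lfloor\lambda^\ast n\rfloor$, note that each $G_i$ remains marginally uniform on $\{1,\dots,2^{\lfloor\lambda^\ast n\rfloor}\}$ because conditioning on the erasure supports does not touch $X^n$, and then a union bound over $m$ marginals loses only a constant factor $m$ and hence no exponent. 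Both approaches exploit the same underlying fact---the dependence among the $G_i$ through the shared $X^n$ can only cost a sub-exponential factor---but the paper's oracle argument makes the operational structure of the joint guessing visible, whereas yours is shorter and requires nothing beyond the marginal law of a single agent. One cosmetic point: in the union-bound step use the threshold $2^{\lfloor\lambda^\ast n\rfloor}/(2m)$ rather than $2^{\lambda^\ast n}/(2m)$ so the bound is literally $\leq 1/2$ without a stray factor from the floor.
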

\noindent Before we proceed to the proof, some remarks are in order. One can verify that the limit of the Guesswork Exponent for the decentralized mechanism, as the number of agents $m$ increases, converges towards $\epsilon$ (see Fig~\ref{fig:1}). Indeed, for large $m$, the term $-mD(\lambda\|\epsilon)$ dominates, and the solution of the optimization is $\lambda \simeq \epsilon$. On the other hand, Remark~1 establishes that the Guesswork exponent is convex for any $m \geq 2$, implying that even two agents that collapse their side information are more powerful than any finite number of agents guessing $X^n$ in a decentralized way. Note that this claim has to be nuanced. Indeed, we are looking at the asymptotic behavior of the guesswork exponent as $n \to \infty$, for a fixed number of agents, i.e., this does not allow a growing number of agents with $n$.

\begin{proof}[Proof Sketch]
	The proof of Thm~\ref{thm:distrib_BEC} follows from two steps. First, we establish an upper bound based on the shortest sequence. Due to space restrictions, we provide below only a proof sketch in the case of $m=2$.
	First, we find an upper bound on the guesswork exponent by considering the exponent of the shortest sequence. The details are omitted, but follow from a standard use of the method of types.
	\begin{align}
	\mathbb{E}[\min_{i = 1,\ldots,m} \{ G(X^{\mathcal{E}^{(i)}_n})^\alpha\}] \leq \mathbb{E}[G(X^{\mathcal{E}^*_n})^\alpha].
	\end{align}
	where $\mathcal{E}^*_n$ is the random variable representing the minimum number of erasures among all $m$ agents. Therefore, we have:
	\begin{align}
	E_\alpha^{(d)} (\texttt{\texttt{BEC}}^m_\epsilon) \leq \sup_{\lambda \in [0,1]}\left( \alpha \lambda - m D(\lambda \| \epsilon) \right).
	\end{align}
	To obtain a matching lower-bound, we consider an oracle that provides additional information to both agents, strictly reducing their guesswork. In general terms, the additional information from the oracle allows to construct explicitly the optimal list of both agents. More precisely, this is achieved by transmitting the position of the common erasures for both agent. The optimal joint strategy is then to construct lists as to minimize queries that have a common subsequence in the overlapping erasures. Indeed, each incorrect query from an agent, shapes the probability distribution of the second agent because of the common sequences. We show that this probability shaping, can be again lower-bounded by a mechanism in which each agent has two guesses at each step, instead of one, therefore not affecting the guesswork exponent.
\end{proof}

\section{BSC}\label{sec:bsc}

\subsection{Centralized Mechanism}

In the case of the $\texttt{BSC}_\delta$, the centralized mechanism is more involved to analyze. Indeed, the resulting channel $\texttt{BSC}_\delta^m$ is not a $\texttt{BSC}$ anymore, since one has $m$ noisy measurements per password-bit. Indeed, as it will be clear soon, guessing should be preceded with some kind of estimation. Nevertheless, for $m = 2$, we can characterize precisely what this channel exactly is, by considering the $2^m = 4$ cases. We will then discuss how to generalize this result to arbitrary $m > 2$.

\begin{theorem}
	The centralized mechanism with $m = 2$ agents under $\texttt{BSC}_\delta$ side-information satisfies:
	\begin{align}
	E_\alpha^{(c)}(\texttt{BSC}_\delta^2) = \sup_{\lambda \in [0,1]} \left( \alpha\lambda H_{1/1+\alpha} \left( \frac{\delta^2}{1 - 2\delta(1-\delta)}\right) + \right. & \nonumber \\
	\left. \phantom{\lambda H_{1/1+\alpha} \left( \frac{\delta^2}{1 - 2\delta(1-\delta)}\right)} \alpha (1-\lambda) - D\left(\lambda \| 2\delta(1-\delta) \right) \right)& \nonumber .
	\end{align}
\end{theorem}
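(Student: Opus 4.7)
The plan is to reduce the two-agent centralized problem bit-by-bit according to whether the two observations agree, apply Lemma~\ref{lem:concatenation_uniform} conditional on the agreement pattern, and then average over the pattern via a large-deviations calculation analogous to the BEC case.

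First I would compute the per-coordinate posterior. Setting $A_i := \mathds{1}\{Y_{(1),i} = Y_{(2),i}\}$, a direct Bayes computation using the uniformity of $X_i$ shows that, conditioned on $A_i = 1$, the posterior of $X_i$ is Bernoulli with effective noise parameter $p^\star := \delta^2/[1-2\delta(1-\delta)]$, whereas conditioned on $A_i = 0$ it is uniform on $\{0,1\}$. The indicators $\{A_i\}$ are i.i.d.\ with $\mathbb{P}(A_i=1) = 1-2\delta(1-\delta)$.

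Next, by the sign-flip symmetry of the BSC and the uniformity of $X^n$, the conditional $\alpha$-th moment of $G(X^n|Y_{(1)}^n,Y_{(2)}^n)$ depends on $(Y_{(1)}^n,Y_{(2)}^n)$ only through the pattern $A^n$. After a coordinate permutation (deterministic given $Y^n$) that groups the agreement positions together, and after the appropriate per-coordinate bit-flip, the conditional law of $X^n$ given $A^n$ factorizes as an i.i.d.\ Ber$(p^\star)$ block of length $k := \sum_i A_i$ independent of a uniform Bernoulli block of length $n-k$. Lemma~\ref{lem:concatenation_uniform} then yields, with $\lambda := k/n$,
\begin{align*}
\mathbb{E}\!\left[ G(X^n|Y_{(1)}^n,Y_{(2)}^n)^\alpha \bigm| k/n = \lambda \right] \de 2^{n[\alpha\lambda H_{1/(1+\alpha)}(p^\star) + \alpha(1-\lambda)]}.
\end{align*}
Since the type of $A^n$ satisfies Cram\'er's LDP with rate $D(\lambda\| 1-2\delta(1-\delta))$, averaging over $\lambda$ and applying Laplace's principle, together with the binary-KL symmetry $D(\lambda\|q) = D(1-\lambda\|1-q)$, gives the claimed supremum.

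The main obstacle I expect is the rigorous justification of the bit-wise reduction: one must verify that, uniformly over realizations of $A^n$ of a given type, the conditional guesswork problem is, up to a known relabelling, exactly the concatenation problem handled by Lemma~\ref{lem:concatenation_uniform}, so that the $\de$ estimate above holds type-wise rather than merely in expectation over the agents' observations. Once this identification is in place, the outer averaging over types is a routine Laplace-principle argument that mirrors the BEC analysis of the previous section.
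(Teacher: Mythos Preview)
Your proposal is correct and follows essentially the same route as the paper: split each coordinate into the agreement and disagreement cases, observe that the posterior is Bernoulli$(p^\star)$ on agreements and uniform on disagreements, invoke Lemma~\ref{lem:concatenation_uniform} on the resulting concatenation, and then Laplace-average over the binomial type of the agreement pattern. The only cosmetic difference is that you make explicit the change of variable via the binary-KL symmetry $D(\lambda\|1-q)=D(1-\lambda\|q)$ to land on the exact parametrization stated in the theorem, whereas the paper leaves this implicit.
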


\begin{corollary}
	The average guesswork, when $\alpha = 1$, is (Fig.~\ref{fig:2}) 
	\begin{align}
	E_1^{(c)}(\texttt{BSC}_\delta^2) = \log(4\delta(1 - \delta) + 1).
	\end{align}
\end{corollary}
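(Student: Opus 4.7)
The plan is to specialise the preceding theorem to $\alpha = 1$ and then carry out the one-dimensional maximisation in closed form. Setting $q := 2\delta(1-\delta)$ and $p := \delta^2/(1-q)$, the first simplification is of the R\'enyi entropy itself: since $1-p = (1-\delta)^2/(1-q)$, one has $\sqrt{p} + \sqrt{1-p} = 1/\sqrt{1-q}$, and hence
$$H_{1/2}(p) \;=\; 2\log\bigl(\sqrt{p}+\sqrt{1-p}\bigr) \;=\; -\log(1-q).$$
This reduces the objective to a strictly concave function $F(\lambda)$ of a single variable whose coefficients depend only on $q$.

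Next I would compute $F'(\lambda)$, using the standard identity $\frac{d}{d\lambda}D(\lambda\|q) = \log\frac{\lambda(1-q)}{q(1-\lambda)}$, and solve $F'(\lambda)=0$ for the unique stationary point $\lambda^{\ast}$. The resulting equation is linear in $\log\frac{1-\lambda}{\lambda}$, so $\lambda^{\ast}$ is an elementary rational function of $q$. Substituting $\lambda^{\ast}$ back into $F$ and invoking the first-order condition to cancel one of the log-ratios against a matching term in the divergence, the remaining logs telescope cleanly into an expression of the form $1 + \log\bigl((1-q)/(1-\lambda^{\ast})\bigr)$, which simplifies in one further step to $\log(1+2q) = \log(1+4\delta(1-\delta))$, the claimed answer.

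The main obstacle is purely algebraic bookkeeping: expanding $D(\lambda^{\ast}\|q)$ naively produces four seemingly unrelated log-terms, and one must use the stationarity relation to group them correctly. To guard against sign errors I would sanity-check the boundary values: $\delta \to 0$ gives $E_1 \to 0$ (the side information determines $X^n$), while $\delta = 1/2$ gives $E_1 = 1$ (the side information is useless and the exponent equals that of a uniform source). A short, independent cross-check that side-steps the theorem entirely is to apply Ar{\i}kan's identity $E_1 = H_{1/2}(X|Y)$ directly to the product channel $\texttt{BSC}_{\delta}^{2}$: a per-coordinate computation of $\sum_{y_1,y_2}\bigl(\sum_{x}\sqrt{p_{X,Y_1,Y_2}(x,y_1,y_2)}\bigr)^{2}$ over the four output pairs yields exactly $1 + 4\delta(1-\delta)$, and tensorisation over the $n$ i.i.d.\ coordinates recovers the corollary.
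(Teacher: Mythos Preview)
Your primary route --- specialising the preceding theorem to $\alpha=1$ and solving the one-variable maximisation --- is exactly what the paper does; its proof of the corollary is the single clause ``Solving for $\alpha=1$ yields the corollary,'' so you are simply supplying the omitted details. The key simplification $H_{1/2}(p)=-\log(1-q)$ is correct and makes the optimisation elementary: with $\lambda$ taken as the \emph{disagreement} fraction one gets $\lambda^{\ast}=2q/(1+2q)$ and $F(\lambda^{\ast})=\log(1+2q)$ directly. One caveat: your stated intermediate form $1+\log\bigl((1-q)/(1-\lambda^{\ast})\bigr)$ does not check out --- with $1-\lambda^{\ast}=1/(1+2q)$ it evaluates to $\log\bigl(2(1-q)(1+2q)\bigr)$, not $\log(1+2q)$ --- so redo that bookkeeping when you write it in full; the method and the final answer are right.

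Your second route, computing $H_{1/2}(X\mid Y_1,Y_2)$ straight from Ar{\i}kan's identity, is not in the paper and is arguably cleaner: it bypasses the variational formula entirely, and the per-coordinate sum $\sum_{y_1,y_2}\bigl(\sum_x\sqrt{p_{X,Y_1,Y_2}(x,y_1,y_2)}\bigr)^2=1+4\delta(1-\delta)$ drops out of two symmetric cases (agree/disagree). The paper's variational route, by contrast, exposes which agreement/erasure profile dominates the exponent, which is conceptually informative but unnecessary if one only wants the numerical value of $E_1^{(c)}$.
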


\begin{proof}
	Denote by $Y^n_1$ and $Y^n_2$ the sequence of side information observed by each agent, and divide each into two parts. In the first part, $Y_1^n$ and $Y_2^n$ agree and have the same bit in every position, that is on this subsequence, the centralized $\tilde{Y}^n$ is essentially the result of a \texttt{BSC} with parameter $\delta^2/(1 - 2\delta(1-\delta))$. In the second part, they disagree and have contradicting bits in every position, which is essentially an erasure. We let $\lambda \in [0,1]$ be the fraction of bits over which they agree, \textit{i.e.} $\lambda n$ is the size of the first subsequence defined above. Therefore, the central authority has to guess a sequence of the type $\tilde{X}^n = (\tilde{U}^{n(1- \lambda)}, \tilde{Z}^{n \lambda})$, where $\tilde{U}^{n(1- \lambda)}$ is an \emph{i.i.d.} sequence of uniform Bernoulli random variables that correspond to the erasures, and $Z^{n \alpha}$ is an \emph{i.i.d.} sequence of Bernoulli random variables with parameter $\tilde{\delta} \triangleq \delta^2/(1 - 2\delta(1-\delta))$. By Lemma~\ref{lem:concatenation_uniform}, we have that:
	\begin{align}
	\lim_{n \to \infty} \frac{1}{n} \log \mathbb{E}[G(\tilde{X}^n)^\alpha] = \lambda \alpha + (1 - \lambda)\alpha H_{1/1+\alpha}(\tilde{\delta}).
	\end{align}
	Noting that the probability of the subsequence of agreements to be of length $\lambda n$ is (up to polynomial factors) $\exp \left\{-n D(\lambda \| 2\delta(1 - \delta)) \right\}$, we get the desired optimization. Solving for $\alpha = 1$ yields the corollary.
\end{proof}

Note that one can easily verify the following
\begin{align}
\log(4\delta(1 -\delta) + 1) \leq H(\delta),
\end{align}
with equality only if $\delta = 1/2$ or $\delta = 0$.

The previous theorem only treats the case of $m = 2$ agents, although a similar technique can be used to tackle any $m \geq 2$ number of agents. Unfortunately, this method is intractable for large $m$. However, the following result allows us to compute the limit as the number of agents grows to infinity:
\begin{lemma}
	Let $\delta < \frac{1}{2}$, then:
	\begin{align}
	\lim_{m \to \infty} E_{\alpha}^{(c)} (\texttt{BSC}_\delta^m) = 0.
	\end{align}
\end{lemma}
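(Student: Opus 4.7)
The plan is to upper-bound $E_\alpha^{(c)}(\texttt{BSC}_\delta^m)$ by the guesswork exponent of a carefully chosen \emph{suboptimal} centralized strategy, and then show that this upper bound vanishes as $m \to \infty$. Since guesswork is at least one, the exponent is non-negative, so an upper bound tending to zero will sandwich the limit at $0$.

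The suboptimal strategy I would analyze first performs per-bit ML decoding on the $m$ parallel BSC outputs, and then orders the candidate sequences by Hamming distance to the resulting estimate. Formally, for each position $i$, let $\hat{X}_i$ be the majority vote of $Y_{(1),i}, \ldots, Y_{(m),i}$, and let the guesser enumerate $x^n$ in non-decreasing Hamming distance from $\hat{X}^n$, breaking ties arbitrarily. Since $\delta < 1/2$, a Chernoff bound on the majority-vote error gives the per-bit error probability $p_m := \Pr(\hat{X}_i \neq X_i) \leq 2^{-m D(1/2 \| \delta)}$, which tends to $0$ as $m \to \infty$.

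The key step is to identify the distribution of the suboptimal guesswork with that of a single-agent $\texttt{BSC}_{p_m}$. By the symmetry of the BSC and the uniform prior on $X_i$, the noise bit $Z_i := X_i \oplus \hat{X}_i$ is Bernoulli$(p_m)$ and is independent of $\hat{X}_i$; since $\hat{X}_i$ depends only on position-$i$ observations and the channels are memoryless across positions, the $Z_i$ are also independent across $i$. Hence $Z^n$ is i.i.d.\ Bernoulli$(p_m)$ and independent of $\hat{X}^n$, so the suboptimal guesswork has the same distribution as $G(Z^n)$. Invoking Ar{\i}kan's formula, mirroring the single-agent calculation $E_\alpha(\texttt{BSC}_\delta) = \alpha H_{1/(1+\alpha)}(\delta)$ stated in the preliminaries, its exponent equals $\alpha H_{1/(1+\alpha)}(p_m)$. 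Because posterior-based ordering is optimal for every moment $\alpha > 0$, we obtain $E_\alpha^{(c)}(\texttt{BSC}_\delta^m) \leq \alpha H_{1/(1+\alpha)}(p_m)$, and continuity of the binary R\'enyi entropy at $0$ closes the argument.

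The main delicate point to verify carefully is that the induced noise $Z^n$ is genuinely i.i.d.\ Bernoulli$(p_m)$ and independent of $\hat{X}^n$; this rests on the per-bit BSC symmetry together with the product structure of the $m$ independent channels across positions. Once this factorization is in hand, the remainder is a direct application of Ar{\i}kan's theorem and Chernoff's bound, and no further machinery is required.
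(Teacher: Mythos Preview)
Your proof is correct and follows essentially the same route as the paper: both replace the full side information $(Y_{(1)}^n,\ldots,Y_{(m)}^n)$ by the per-bit majority vote $\hat{X}^n$ (the paper calls it $\tilde{Y}^n$), observe that this deterministic processing can only hurt the guesser, recognize the result as a single-agent $\texttt{BSC}$ with crossover probability $p_m\to 0$, and conclude via $E_\alpha(\texttt{BSC}_{p_m})=\alpha H_{1/(1+\alpha)}(p_m)\to 0$. Your write-up is somewhat more explicit than the paper's in justifying that $Z^n=X^n\oplus \hat{X}^n$ is i.i.d.\ Bernoulli$(p_m)$ independent of $\hat{X}^n$ and in invoking a Chernoff bound for $p_m$, but there is no substantive difference in approach.
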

\begin{proof}
	For a fixed $n$ and $m$, we do a deterministic pre-processing on the sequences $Y^n_{(1)}, \ldots, Y^n_{(m)}$, which can only increase the guesswork, by definition. Namely, we let $\tilde{Y}_i$ be defined as the majority bit among the received side-informations at index $i$, that is :
	\begin{align}
	\tilde{Y}_i = \left\{
	\begin{array}{ll}
	0 & \text{,if } N_i(0) \geq N_i(1), \\
	1 & \text{,if } N_i(0) < N_i(1),
	\end{array} \right.
	\end{align} 
	where $N_i(0) = \sum_{j = 1}^m Y_{(j),i}$, for $Y_{(j),i}$ the $i$-th bit of the sequence $Y^n_{(j)}$, and $N_i(1) = n - N_i(0)$. Then, it is easy to see that $\tilde{Y}^n$ is the output of $X^n$ through a \texttt{BSC} with parameter $\delta_m$, such that $\delta_m \to 0$ as $m \to \infty$ for any $\delta < 1/2$. Therefore, we have, for any $n$, and for fixed $m$, the following inequality:
	\begin{align}
	\mathbb{E}[G(X^n|Y')^\alpha] &\leq \mathbb{E}[G(X^n|\tilde{Y}^n)^\alpha] \\
	\Rightarrow E_\alpha^{\text{(c)}} (\texttt{BSC}_\delta^m) &\leq E_\alpha(\texttt{BSC}_{\delta_m}) \\
	\Rightarrow \lim_{m \to \infty} E_\alpha^{\text{(c)}} (\texttt{BSC}_\delta^m) &\leq \lim_{m \to \infty} E_\alpha(\texttt{BSC}_{\delta_m}).
	\end{align}
	As the right hand side of the last inequality converges to $0$ for any $\delta < \frac{1}{2}$, we obtain the desired result.
\end{proof}
In other words, when $m$ is large enough, one can \emph{estimate} each bit of the password based on the noisy observations.

\subsection{Decentralized Mechanism}

\begin{figure}
	\centering
	\includegraphics[scale = .62]{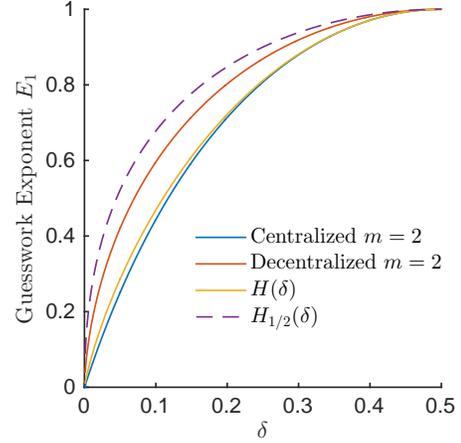}
	\caption{Comparison of the centralized and decentralized setting for $\texttt{BSC}$.}\label{fig:2}
	\vspace{-.1in}
\end{figure}

In contrast with the $\texttt{BEC}$ case, when the side-information $Y^n_{(i)}$ is the result of a $\texttt{BSC}$, the resulting guessworks are independent. Indeed, as stated before  $G(X^n|Y_{(i)}^n) = G(Z^n_{(i)})$, where now the sequences of flips $Z^n_{(i)}$ are independent. The following result, which is a special case of the more general large deviation result in \cite{christiansen2015multi} follows directly:

\begin{theorem}
	The decentralized mechanism with $m$ agents under \texttt{BSC} side-information has expected Guesswork exponent:
	\begin{align}
	E_\alpha^{(d)}(\texttt{BSC}_\delta^m) = \alpha H_{\frac{m}{\alpha+m}}(\delta).
	\end{align}
\end{theorem}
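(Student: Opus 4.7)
The plan is to exploit two structural facts already highlighted in the excerpt: the identity $G(X^n\mid Y^n_{(i)})=G(Z^n_{(i)})$ for the effective noise $Z^n_{(i)}:=X^n\oplus Y^n_{(i)}$, and the fact that the $Z^n_{(i)}$ are \emph{mutually independent} i.i.d.\ Bernoulli$(\delta)$ sequences across $i=1,\dots,m$. This reduces the claim to computing the exponential growth of $\mathbb{E}[\min_i G(Z^n_{(i)})^\alpha]$ for $m$ independent noise sequences, which is exactly the setting of the multi-user large deviations principle of \cite{christiansen2015multi}.

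Next I would express guesswork in terms of the type of the noise. Since $\delta\le 1/2$, the optimal list orders binary sequences by increasing Hamming weight, so any $z^n$ of weight $k\le n/2$ satisfies $\tfrac{1}{n}\log G(z^n)\to H(k/n)$ up to subexponential factors, while $k>n/2$ contributes only the trivial bound $\log 2$. Letting $\lambda_i$ be the empirical weight of $Z^n_{(i)}$ and noting that $\{\max_i\lambda_i>1/2\}$ has rate $D(1/2\|\delta)>0$, its contribution to any moment is exponentially dominated. Because $H$ is increasing on $[0,1/2]$, this yields
\[
\mathbb{E}\!\left[\min_i G(Z^n_{(i)})^\alpha\right]\;\de\;\mathbb{E}\!\left[2^{n\alpha H(\min_i\lambda_i)}\right].
\]

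The core step is then a joint method-of-types / Varadhan calculation. By independence, $\Pr[\lambda_i\approx t_i,\ \forall i]\de 2^{-n\sum_i D(t_i\|\delta)}$, hence
\[
\frac{1}{n}\log\mathbb{E}\!\left[2^{n\alpha H(\min_i\lambda_i)}\right]\to \sup_{t_1,\dots,t_m}\!\Bigl[\alpha H(\min_i t_i)-\sum_{i=1}^{m} D(t_i\|\delta)\Bigr].
\]
Fixing $t=\min_i t_i$ and minimizing $\sum_i D(t_i\|\delta)$ over $t_i\ge t$, for $\delta\le 1/2$ one obtains $m D(t\|\delta)$ when $t\ge\delta$ (all agents saturate at $t_i=t$) and only $D(t\|\delta)$ when $t<\delta$ (the other $m-1$ agents relax to $t_i=\delta$). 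A short calculus check shows the unconstrained stationary point of $\alpha H(t)-mD(t\|\delta)$ lies in the region $t>\delta$ whenever $\delta<1/2$, so the supremum is attained on that branch; monotonicity $H_\beta(\delta)\ge H_1(\delta)=H(\delta)$ confirms that it dominates the other branch, whose best value on $[0,\delta]$ is just $\alpha H(\delta)$.

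Finally, setting the derivative of $\alpha H(t)-mD(t\|\delta)$ to zero yields the stationary point $t^{*}=\delta^{\beta}/(\delta^{\beta}+(1-\delta)^{\beta})$ with $\beta=m/(\alpha+m)$; substituting back and using the identity $(\alpha+m)(1-\beta)=\alpha$ collapses the expression to $\alpha H_{m/(\alpha+m)}(\delta)$, as claimed. The main obstacle I anticipate is the bookkeeping in the joint LDP for the minimum --- in particular, justifying the switch between the $t<\delta$ and $t>\delta$ branches, and producing matching upper (``dominated by the shortest list'') and lower (``exhibit the optimal joint type'') bounds; both ultimately reduce, via standard type-counting, to the single-variable optimization above.
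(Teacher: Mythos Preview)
Your proposal is correct and follows essentially the same route as the paper's alternative proof: reduce $G(X^n\mid Y^n_{(i)})$ to $G(Z^n_{(i)})$ for independent noise sequences, replace guesswork by the type of $Z^n_{(i)}$, compute the large-deviations rate of $\min_i$ (yielding the two-branch function $mD(\cdot\|\delta)$ for $t>\delta$ and $D(\cdot\|\delta)$ for $t\le\delta$), and observe that the optimizer always falls in the $t>\delta$ branch. The only cosmetic difference is that the paper derives the rate of the minimum directly via the factorization $\Pr(\min_i S_i=sn)=m\Pr(S_1=sn)\prod_{i\ge2}\Pr(S_i\ge sn)$, whereas you obtain it through a joint Varadhan optimization over $(t_1,\dots,t_m)$ followed by minimizing $\sum_i D(t_i\|\delta)$ at fixed $\min_i t_i$; these are equivalent, and your explicit computation of $t^*$ and reduction to $\alpha H_{m/(\alpha+m)}(\delta)$ just fills in the final step that the paper leaves to the reader.
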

\begin{proof}[Alternative Proof]
	For completeness, we provide a proof that does not require to evaluate the full large deviation behavior of the guesswork to evaluate its moments. First we recall the following elementary result. Let $S^n_i$ be the sum of $n$ i.i.d. coin flips with parameter $\delta$. Then, for any $\delta < s \leq 1$:
	\begin{align}
	&\text{Pr}(\min_{i = 1\ldots,m} S_i = sn) = m \text{Pr}(S_1 = sn) \prod_{i=2}^mPr(S_i \geq sn) \\
	\quad &\de \exp \{- n D(s || \delta)  \} \left( \exp\{ -nD(s||\delta)\} \right)^{m-1} \\
	\quad &\de \exp \{ -nm D(s||\delta)\}.
	\end{align}
	Alternatively, when $0 < s \leq \delta$, we have:
	\begin{align}
	\text{Pr} \left(\min_{i = 1,\ldots,m} S_i = sn\right) & \de \exp\{ -nD(s||\delta)\}.
	\end{align}
	Using the previous results, and recalling that $G(Z^n_{(i)}) \de 2^{S^n_i}$, where $S^n_i$ is the number of 0's in the sequence (the type of the binary sequence), we obtain that:
	\begin{align}
	\mathbb{E}[\min_{i = 1,\ldots,m} G(Z^n_{(i)})^\alpha] \de \exp \left\{ n \sup_{\lambda \in [0,1]} \left(  \alpha \lambda - f(\lambda,m) \right) \right\},
	\end{align}
	where $f(\lambda,m) = \mathbf{1}\{ \lambda > \delta \} mD(\lambda ||\delta) + \mathbf{1}\{\lambda \leq \delta \} D(\lambda || \delta)$. The desired result follows by observing that the maximization over $\lambda$ always lead to a solution in the range $\lambda > \delta$, for any $\alpha > 0$.
\end{proof}

\begin{remark}
	The limit when $m \to \infty$ of the decentralized setting tends to the Shannon entropy $\alpha H(\delta)$ for any $\alpha>0$.
\end{remark}

\section*{Acknowledgment} 
\noindent The authors are thankful to Ken Duffy, whose comments greatly improved the presentation and content of this paper.

\bibliographystyle{IEEEtran}
\bibliography{IEEEabrv,strings}

\end{document}